\def\BState{\State\hskip-\ALG@thistlm}
\newtheorem{thm}{Theorem}
\newtheorem{lem}[thm]{Lemma}
\begin{document}
	
	\title{A Distortion Based Approach \\ for Protecting Inferences}
	\author{
		\IEEEauthorblockN{Chi-Yo Tsai, Gaurav Kumar Agarwal, Christina Fragouli, Suhas Diggavi }
		Department of Electrical Engineering, UCLA, Los Angeles, USA\\
		Email: \{chiyotsai, gauravagarwal, christina.fragouli, suhasdiggavi\}@ucla.edu
		\thanks{This work was supported in part by NSF grants 1321120 and 1514531.
		}

	}
	\IEEEoverridecommandlockouts
	
	\maketitle
	
	\begin{abstract}
		Eavesdropping attacks in inference systems aim to learn not the raw
		data, but the system inferences to predict and manipulate
		system actions.  We argue that conventional {information security } measures can be ambiguous on the adversary's estimation abilities, and adopt instead a distortion based framework {that enables to operate over a metric space}. We show that requiring perfect distortion-based security is more frugal than requiring perfect
		{information-theoretic} secrecy even for block length one codes, offering in some
		cases unbounded gains. Within this framework, we design algorithms
		that enable to efficiently use shared randomness, and show that each
		{bit of} shared random key is exponentially useful in security.
	\end{abstract}

	\section{Introduction}
	The operation of {Cyber Physical Systems (CPS), such as transportation systems and the electrical grid,}
	increasingly relies on networked data collection and inference systems.
	Such systems distributively collect and compute functions of data for control decisions. Adversarial attacks in such systems aim to learn not the raw data but the system inferences to predict and manipulate the system actions.    
	Most of the established security theory is tailored on protecting data;
	in contrast, in this paper,  we discuss how to  protect inference processes and decisions.
	
	
	Our first observation is that {information security measures
		(such as the conventional equivocation rate)}
	can be ambiguous on how much an eavesdropper has learned {about a CPS system operation}.
	Assume that a control function takes values uniformly at random in $\{1,2,\ldots, 20\}$, and the associated action is proportional to the function value. The entropy of this function is $\log(20)$. Now consider two cases. For case I, at the end of a transmission, the eavesdropper knows that the function value is in $\{1, 2\}$, each with probability $1/2$. For case II, the eavesdropper knows that the function value is in $\{1, 20\}$, each with probability $1/2$. Both cases have an equivocation rate of $1$ (i.e. the entropy of the function given the information the eavesdropper has, is $1$). But in the first case, the two possibilities are much closer than the second case, so the eavesdropper can predict the system action with high precision. Similarly, if an eavesdropper knows that the function value is uniformly at random in  $\{1, 20 \}$, this would still be less useful for the eavesdropper than knowing it is in $\{1, 2, 3, 4\}$, although the equivocation rate is lower. 
	
	{In this paper, we use instead a distortion measure for security that aims to maximize the square error difference between the eavesdropper's estimate and the true value of a function.
		Squared error distortion is a  widely accepted metric in estimation and control, and can capture how much an adversary has learned about core attributes of a CPS system, such as the system state. 
	}
	
	%
	
	
	
	
	Moreover, while for perfect information theoretic security we require shared keys of size equal to the message entropy,
	requiring perfect distortion-based security is much more frugal.
	For example, if we again take the example of a control function taking values uniformly at random from $\{1,2,\ldots, 20\}$, we show in Theorem~\ref{thm:optimality} that only one bit of shared key is sufficient to guarantee  perfect distortion based security, as opposed to the $\log(20)$ bits of shared keys for information theoretic security. We obtain this result using a scalar coding scheme of block length one.
	
	We design schemes to achieve distortion based security.
	Our main contributions are:\\
	$\bullet$ We show that for a single source, each additional {bit of} shared key is exponentially useful in distortion based security.\\
	$\bullet$ We design a polynomial time algorithm and show it is optimal for regularly spaced function values.\\
	$\bullet$ We prove that for multiple distributed sources and separable functions it suffices to protect each individual source separately, and is also necessary for sum and product functions.
	
	The paper is organized as follows. Section~\ref{settings} presents the problem formulation;  Section~\ref{main_results} summarizes our results; Proofs for these results are outlined in Sections~\ref{results_single_source} and~\ref{sec:results_multiple_sources}.

	\subsection{Related Work}
	While information theoretic security has been extensively studied, most notably by Shannon~\cite{shannon1949communication} and Wyner~\cite{wyner1975wire}, study of distortion based security was started by Yamamoto~\cite{yamamoto1988rate}, where the goal is to maximize the distortion of an eavesdropper's estimate on a message. Schieler and Cuff~\cite{schieler2014rate} later showed that, in the limit of an infinite block length ($n$) code, only $\log(n)$ bit of secret keys are needed to achieve the maximum possible distortion. However, Schieler and Cuff also showed that such secrecy is rather fragile, as causal disclosure of even a single message symbol can compromise the secrecy of the entire block. 
	This issue was because the coding scheme involves infinite block length. In this paper, we take block length equal to one, which obviates the need to wait and accumulate data at the sensor. It also removes the fragility of the distortion based measure as now we do not need to code a sequence of symbols jointly, and can rather code each symbol independently with a {new} key.  We find that for the block length one codes, while a regularly spaced alphabet can be secured easily with only one bit of shared secret key, the problem becomes combinatorially hard for irregularly spaced alphabets. We also note that, unlike information theoretic security, distortion security of alphabet $\mathcal{A}$, does not imply distortion security of alphabet $\mathcal{B}$ which is a one-to-one mapping of $\mathcal{A}$. 
	A different notion of secure estimation is studied by Wiese et. al. in \cite{wiesezeroerror} where they considered zero-error secret capacity.
	{ Semantic security has also been used for function protection, and can be posed in two different versions: (i) in an information
		theoretical setting, which has been shown to be equivalent to
		``strong'' information-theoretic security \cite{bellare2012semantic,czap2015secret} -
		and thus does not enable operation with smaller amounts of key than
		traditional strong information-theoretic security; (ii) in a computational setting,  not related to the framework of this
		paper {as we do} not make computational assumptions on the power of the
		adversary.}
	
	
	
	\begin{figure}[!t]
		\centering
		\begin{tikzpicture}
		\draw [line width=0.3mm] (2,0) circle (3mm) node {$R$};
		\draw [line width=0.3mm] (-1,1) circle (3mm) node {$S_1$};
		\draw [line width=0.3mm] (-1,0.2) circle (3mm) node {$S_2$};
		\draw [line width=0.3mm] (-1,-0.3) node {$\vdots$};
		\draw [line width=0.3mm] (-1,-1.1) circle (3mm) node {$S_n$};
		\draw [line width=0.3mm] (-0.7,1)--(1.7,0);
		\draw [line width=0.3mm] (-0.7,0.2)--(1.7,0);
		\draw [line width=0.3mm] (-0.7,-1.1)--(1.7,0);
		\end{tikzpicture}
		\caption{The sources are connected to the receiver through  noiseless channels. The receiver computes $f(X_1,  \dotsc, X_n)$.}
		\label{fig:settings}			
	\end{figure}
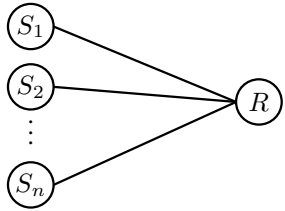
	
	
	\section{Setup}
	
	\label{settings}
	Our system consists of $n$ sources $S_1, S_2, \ldots, S_n$ connected through noiseless links to a common receiver as shown in Fig.~\ref{fig:settings}. An eavesdropper, Eve, observes all the $n$  links. 
	Each source produces a symbol $X_i$ drawn with a distribution $P_i$ from a discrete alphabet\footnote{For many systems, measurements are over reals, and we discretize by quantization at each source.} set $\mathcal{X}_i$ of size $m_i$. We assume that the symbols generated at  the $n$ sources are mutually independent. 
	
    The receiver wants to compute a function $f$ (with co-domain $\mathbb{R}$) of source symbols $\underbar{X}:= [X_1,X_2,\ldots, X_n]$, such that the receiver's computation has no error and Eve's estimate is ``maximally distorted'' with respect to the function $f$. 
    Each source is also given a secret key of size $k_i$ bits shared with {only the receiver and not with the eavesdropper}; equivalently a symbol ($K_i$) uniformly drawn from  $\{1,2,\ldots, 2^{k_i}\}$. 
    {We denote the symbol transmitted by source $S_i$ as {$g(X_i,K_i)$} and shorten it as $g(X_i)$ for brevity.
    With $g(\underbar{X},\underbar{k})$ (and shorten it as $g(\underbar{X})$), we denote set of symbols transmitted from all the sources i.e., $\{g(X_i): i \in [n] \},$ where $[n]:= \{1,2,\ldots, n\}$. 
	The distortion of Eve's estimate {for} $\underbar{k}:= [k_1, k_2, \ldots, k_n]$ bits of keys is:}
	\begin{equation}
	D_{ach}(\underbar{k})  = \min_{\hat{f}}\mathbb{E}_{\underbar{X}, \underbar{k}} [(f(\underbar{X})-\hat{f}(g(\underbar{X}, \underbar{k})))^2], 
	\end{equation}
	where $\hat{f}$ is the Eve's {strategy to estimate $f$ from her observation $g(\underbar{X},\underbar{k})$}.
	If { Eve has no information}, { her distortion is the maximum distortion and is denoted as $D_{max}$. In this case, Eve's best estimate is $\mathbb{E}[f]$ and the corresponding distortion is}
	\begin{equation}
	D_{max} = \text{var}(f(\underbar{X})).
	\end{equation}
	We are interested in the following two problems.\\
	$\bullet$ For a given function $f$, what is the minimum amount of shared keys required so that $D_{ach}(\underbar{k})$ is equal to $D_{max}$.\\
	$\bullet$ For a given function $f$ and amount of keys $\underbar{k}$, 
	what  distortion $D_{ach}(\underbar{k})$ an optimal encoding scheme achieves.
	
	
	\section{Main Results}
	\label{main_results}
	The following two theorems summarize our main results. 
	The first one deals with single-source computation, and the second one deals with multi-source computation, both under the assumption of uniform source alphabet. Proofs are outlined in sections~\ref{results_single_source} and~\ref{sec:results_multiple_sources}, respectively.
	
	\begin{thm} \label{thm:single_decay}
		Suppose there is a single source modeled as a random variable $X$ connected to the receiver through a noiseless channel, and the receiver wishes to compute a function $Y = f(X)$ with perfect accuracy. 
		Assume $k$ bits of secret key are shared between the transmitter and the receiver.
		If  $Y$ is uniformly distributed on some alphabet $\mathcal{Y} \subseteq \mathbb{R}, \ |\mathcal{Y}| = m$, then the difference between the maximum distortion of $Y$ when no information is available and the eavesdropper's achievable distortion can be upper bounded by\footnote{Results do not depend on the value $m$ takes and are true $\forall m \in \mathbb{N}$.}
		\begin{equation}\label{eq1}
		\Delta  = D_{max} - D_{ach}(k) \leq  \frac{D_{max}}{2^k}.
		\end{equation}
		Furthermore, if $d = \max\limits_{y_i,y_j \in \mathcal{Y}} (y_i - y_j) $, then
		\begin{equation}\label{eq2}
		\Delta  = D_{max} - D_{ach}(k) \leq  \frac{d^2}{2^{2k}}.
		\end{equation}
	\end{thm}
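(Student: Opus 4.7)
The plan is to reduce the encoder-design problem to a partitioning problem. Given $k$ key bits, any $g$ that the receiver can decode induces preimage sets $\{x : g(x,\kappa)=z \text{ for some } \kappa\}$ of size at most $2^k$. By using the key uniformly we may assume that after observing the transmitted symbol Eve's posterior on $Y$ is uniform over its preimage, and, without loss of optimality for the adversary, that these preimages partition $\mathcal{Y}$ into equal groups of size $2^k$ (I take $2^k\mid m$ throughout; otherwise use $\min(k,\lceil \log_2 m\rceil)$ bits of key and pad with duplicates). Any such partition is realizable by an appropriate $g$. Letting $\bar G$ be the mean of the group containing $Y$, Eve's Bayes-optimal estimate is $\bar G$ and her conditional MSE equals the within-group variance, so by the law of total variance
\begin{equation*}
\Delta \;=\; D_{max}-D_{ach}(k) \;=\; \mathrm{Var}(\bar G).
\end{equation*}
Both bounds therefore reduce to exhibiting a partition of $\mathcal{Y}$ into $m/2^k$ groups of size $2^k$ with small $\mathrm{Var}(\bar G)$.

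For (\ref{eq1}) I would use a probabilistic/averaging argument. Pick a partition of $\mathcal{Y}$ into $m/2^k$ groups of size $2^k$ uniformly at random. By exchangeability every fixed group is a uniform sample of size $2^k$ drawn without replacement from $\mathcal{Y}$, whose sample mean satisfies the finite-population variance identity
\begin{equation*}
\mathbb{E}\bigl[(\bar G-\bar Y)^2\bigr] \;=\; \frac{D_{max}}{2^k}\cdot\frac{m-2^k}{m-1} \;\le\; \frac{D_{max}}{2^k}.
\end{equation*}
Averaging the bound over the $m/2^k$ groups gives $\mathbb{E}[\Delta]\le D_{max}/2^k$, so at least one deterministic partition achieves (\ref{eq1}).

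For (\ref{eq2}) I would construct the partition explicitly by recursive ``balanced pairing''. Sort $y_1\le\cdots\le y_m$ and form pairs $\{y_i,y_{m+1-i}\}$ with means $u_i=(y_i+y_{m+1-i})/2$. Since the sequences $\{y_i\}$ and $\{y_{m+1-i}\}$ are monotone in opposite directions, in the difference $u_j-u_i=\tfrac12[(y_j-y_i)+(y_{m+1-j}-y_{m+1-i})]$ the two summands always have opposite signs, giving $\max_i u_i-\min_i u_i\le (y_m-y_1)/2 = d/2$. Iterating the same balanced pairing on the sorted list of current group means for a total of $k$ rounds produces groups of size $2^k$ whose means lie in an interval of length at most $d/2^k$ (by a one-line induction). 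Popoviciu's inequality then yields $\mathrm{Var}(\bar G)\le (d/2^k)^2/4 \le d^2/2^{2k}$, establishing (\ref{eq2}).

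The step I expect to be the main obstacle is the reduction itself: formally justifying that, at the worst case over Eve's strategies, attention can be restricted to encoders whose induced posteriors are uniform over equal-size preimages, and cleanly handling the residual cases where $2^k\nmid m$. Once this reduction to a variance-minimisation problem over partitions is in place, the two bounds follow from the averaging and constructive arguments above.
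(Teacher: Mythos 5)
Your reduction is sound as an achievability argument, and for the divisible case your two proofs are correct and genuinely different from the paper's. One caveat on phrasing: restricting to encoders that fix a partition of $\mathcal{Y}$ into groups of size $2^k$ and use the key as a within-group one-time pad is a \emph{choice of scheme}, not ``without loss of optimality for the adversary''; the paper's general encoders are per-key injections $\sigma_i$ of $\mathcal{Y}$ into bins, and their bins need not arise from any partition of $\mathcal{Y}$ (each $y$ typically lies in $2^k$ different preimages). Granting that, the identity $\Delta=\mathrm{Var}(\bar G)$ is exactly the paper's Lemma~\ref{lem:diff_expression} specialized to your schemes. For (\ref{eq1}) the paper is constructive: it runs the greedy Algorithm~\ref{greedy_binning} and controls the cross terms with Chebyshev's sum inequality, which also feeds into its optimality claim for $k=1$ (Theorem~\ref{thm:optimality}); your random-partition argument with the finite-population sampling variance $\frac{D_{max}}{2^k}\cdot\frac{m-2^k}{m-1}$ is shorter and correct but non-constructive. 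For (\ref{eq2}) your recursive balanced pairing plus Popoviciu parallels the paper's Algorithm~\ref{exponential_binning} and Lemma~\ref{lem:bin_sum} (swap-until-balanced plus pigeonhole) and in fact yields the slightly sharper constant $d^2/(4\cdot 2^{2k})$.

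The genuine gap is the case $2^k\nmid m$, which the theorem's footnote explicitly includes; ``pad with duplicates'' is not a proof, since duplicating symbols changes the (uniform) law of $Y$ and with it the identity $\Delta=\mathrm{Var}(\bar G)$ and the quantities $D_{max}$, $\mathbb{E}[Y]$. Two concrete repairs: (i) for (\ref{eq1}), keep a partition into $G=\lceil m/2^k\rceil$ groups, all of size $2^k$ except possibly one smaller; the same exchangeability computation gives
\begin{equation*}
\mathbb{E}[\Delta]\;=\;\sum_{g}\frac{|g|}{m}\cdot\frac{D_{max}}{|g|}\cdot\frac{m-|g|}{m-1}\;=\;D_{max}\,\frac{G-1}{m-1}\;\le\;\frac{D_{max}}{2^k},
\end{equation*}
since $(G-1)2^k\le m-1$, so the averaging argument survives intact. (ii) For (\ref{eq2}) the pairing argument needs even cardinality at every round and fails already for odd $m$; the clean fix---and the reason the paper sets things up as it does---is to pair not the elements of $\mathcal{Y}$ but the multiset of $2^k$ copies of $\mathcal{Y}$ (size $2^k m$, always divisible by $2^k$), producing $m$ bins of exactly $2^k$ entries with repetitions allowed. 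Your range-halving induction and Popoviciu bound go through verbatim on this multiset, and such a balanced assignment is always realizable by a decodable scheme: the $y$-to-bin multigraph is $2^k$-regular bipartite, hence decomposes into $2^k$ perfect matchings, one injective map $\sigma_i$ per key value. Without one of these repairs your proof covers only $2^k\mid m$.
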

	

	Theorem~\ref{thm:single_decay} implies that each bit of the shared key exponentially increases the eavesdropper's distortion toward $D_{max}$. In particular, 
	with just $5$ bits of shared key we can achieve a distortion close to $97\%$ of $D_{max}$. In contrast, for perfect information theoretic secrecy, we require $\log(m)$ bits of key.
	
	The proof of Theorem~\ref{thm:single_decay} is constructive. Depending on the key value $i$, the source uses a different mapping $\sigma_i$ to create the symbols to transmit.  Algorithms~\ref{greedy_binning} and~\ref{exponential_binning} describe how to create the mappings to achieve~(\ref{eq1}) and~(\ref{eq2}), respectively.  Alg.~\ref{greedy_binning} has polynomial-time complexity, is optimal for $k=1$, and achieves perfect secrecy when $\mathcal{Y}$ is ``regularly spaced'' (see Section~\ref{results_single_source}).
	\begin{thm} \label{thm:separable}
		If $f(\underbar{X}) = \sum\limits_{\ell=1}^{L} \prod\limits_{i=1}^{n}f_i^{(\ell)}(X_i)$,  where $f_i^{(\ell)}(X_i)$ is an arbitrary function for all $i \in [n], \ell \in [L] $ then it is sufficient to secure each individual $f_i^{(\ell)}(X_i)$.
		Furthermore, if
		{
			\begin{align*}
			1) f(\underbar{X}) & = \sum\limits_{i=1}^n f_i(X_i), \mbox { or } \\
			2)  f(\underbar{X}) & = \prod_{i=1}^n f_i(X_i) \mbox { and } \ \prod_{i=1}^n \mathbb{E}[f_i(X_i)]. var[f_i(X_i)] \neq 0,
			\end{align*}
		}	
		it is also necessary to secure the individual functions $f_i(X_i)$.
	\end{thm}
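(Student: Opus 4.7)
My plan is to translate ``perfect distortion security'' into the statement $\mathbb{E}[f(\underbar{X}) \mid g(\underbar{X})] = \mathbb{E}[f(\underbar{X})]$ almost surely, using the law of total variance $\text{Var}(f) = \mathbb{E}[\text{Var}(f\mid g(\underbar{X}))] + \text{Var}(\mathbb{E}[f \mid g(\underbar{X})])$: $D_{ach} = D_{max}$ iff the second term vanishes. The key structural fact is that because the pairs $(X_i,K_i)$ are mutually independent across $i$, the transmitted symbols $Z_i := g(X_i,K_i)$ are mutually independent, and conditional expectations of functions of $X_j$ given the whole vector $(Z_1,\ldots,Z_n)$ collapse to conditioning on $Z_j$ alone.

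For sufficiency, I would expand $\mathbb{E}[f \mid Z_1,\ldots,Z_n]$ by linearity over $\ell$, then factor each product term as $\mathbb{E}[\prod_i f_i^{(\ell)}(X_i) \mid Z_1,\ldots,Z_n] = \prod_i \mathbb{E}[f_i^{(\ell)}(X_i) \mid Z_i]$ via the cross-source independence. If every $f_i^{(\ell)}$ is individually secured, each inner factor equals its unconditional mean, the whole sum equals $\mathbb{E}[f]$, and perfect distortion security for $f$ follows.

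For necessity in the sum case, set $h_i(Z_i) := \mathbb{E}[f_i(X_i) \mid Z_i]$ and write $\mathbb{E}[f\mid Z] = \sum_i h_i(Z_i)$. The security hypothesis gives $\sum_i (h_i(Z_i) - \mathbb{E}[f_i]) = 0$ a.s.; taking variance on both sides and using independence yields $\sum_i \text{Var}(h_i(Z_i)) = 0$, forcing each $h_i(Z_i)$ to be almost surely constant, which is precisely perfect security of $f_i$.

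The main obstacle is the necessity direction for products. Here the identity becomes $\prod_i h_i(Z_i) = \prod_i \mathbb{E}[f_i]$ a.s. I would use $\mathbb{E}[f_i] \neq 0$ to normalize, setting $r_i := h_i/\mathbb{E}[f_i]$ so that $\mathbb{E}[r_i] = 1$ and $\prod_i r_i(Z_i) = 1$ a.s. Taking second moments under independence gives $\prod_i \mathbb{E}[r_i^2] = 1$, while Jensen yields $\mathbb{E}[r_i^2] \geq (\mathbb{E}[r_i])^2 = 1$; equality in every factor then forces each $r_i$ to be a.s. constant and hence $\equiv 1$, so $h_i(Z_i) = \mathbb{E}[f_i]$ a.s. The condition $\text{Var}(f_i) \neq 0$ ensures that each $f_i$ is a nondegenerate function that genuinely needs protecting, which is what makes the conclusion nonvacuous.
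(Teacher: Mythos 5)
Your proposal is correct, and the sufficiency half coincides with the paper's argument: both reduce perfect distortion security to the condition $\mathbb{E}[f(\underbar{X})\mid g(\underbar{X})]=\mathbb{E}[f(\underbar{X})]$ (the paper's Lemma~\ref{thm:max_dist_cond}, which you rederive via the law of total variance) and then factor $\mathbb{E}\bigl[\textstyle\prod_i f_i^{(\ell)}(X_i)\mid g(\underbar{X})\bigr]=\prod_i\mathbb{E}[f_i^{(\ell)}(X_i)\mid g(X_i)]$ using independence across sources. Where you genuinely diverge is the necessity direction. The paper argues by contraposition with a pointwise construction: for each unsecured $f_i$ it exhibits a particular observed value of $g(X_i)$ at which $\mathbb{E}[f_i(X_i)\mid g(X_i)]$ exceeds $\mathbb{E}[f_i(X_i)]$ (in absolute value, for the product case), and concludes that the conditional mean of $f$ at the corresponding joint observation is shifted. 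You instead work with the conditional means $h_i(Z_i)=\mathbb{E}[f_i(X_i)\mid Z_i]$ as random variables: for sums, $\text{var}\bigl(\sum_i h_i(Z_i)\bigr)=\sum_i\text{var}(h_i(Z_i))=0$ by independence forces every $h_i$ constant; for products, normalizing by $\mathbb{E}[f_i]\neq 0$ and comparing second moments ($\prod_i\mathbb{E}[r_i^2]=1$ with each factor $\geq 1$ by Jensen) forces every $r_i\equiv 1$. Your route avoids the paper's slightly informal ``without loss of generality assume $\mathbb{E}[f_i(X_i)\mid g(X_i)]>\mathbb{E}[f_i(X_i)]$'' step and the case analysis over which sources are unsecured, at the cost of needing finite second moments (automatic here, since alphabets are finite); the paper's pointwise argument is more elementary and makes explicit which eavesdropper observation leaks information. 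One small piece of the paper's proof you do not address is its closing remark that the receiver can always recover each $f_i(X_i)$ from the transmissions by fixing the other arguments (so the individual $f_i$ are indeed ``transmitted'' and securing them is the meaningful requirement); this is an auxiliary observation rather than part of the distortion argument, but it is worth adding a sentence to your write-up. You are also right that $\text{var}[f_i]\neq 0$ plays no role in the second-moment argument itself; as in the paper, it only rules out degenerate factors for which securing is vacuous, whereas $\prod_i\mathbb{E}[f_i]\neq 0$ is the hypothesis that is actually load-bearing in the product case.
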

	
	Theorem~\ref{thm:separable} states that to protect a function that can be written as a sum of product separable functions, it suffices to protect each individual component separately. This includes a fairly wide class of functions that can be protected. Theorem~\ref{thm:separable} also shows that such a scheme is optimal when we wish to compute sum and product under our setting.
	
	As an example, suppose there are $n$ source symbols $X_1, X_2, \dotsc, X_n$ all i.i.d. on $[m]$.  If  the function to compute is $\sum_{i=1}^n X_i$, then we can protect each $X_i$ individually with $1$ bit of key, and thus $n$ bits of key would achieve perfect distortion based secrecy. It can be shown that $n \log(m)$ bits will be necessary to information-theoretically secure the sum function.
	

	\begin{algorithm}
		\caption{}\label{greedy_binning}
		\begin{algorithmic}[1]
			\Statex WLOG assume $y_1 \geq y_2 \geq \ldots \geq y_m$. We use $r = m$ and select permutations $\sigma_i$ on $[m]$ for $ i \in [2^k]$. The source uses the mapping $\sigma_i$ if the shared key $K$  value is $i$. 
			\Statex \textbf{Select Permutations:}
			\State  $\sigma_1(j) = j, \ \forall j \in [m]$ \quad  \text{ (\em $\sigma_1$ is the identity permutation)}
			\For {$2 \leq i \leq 2^k$}  \quad \text{(\em determine each $\sigma_i$ at  step $i$)}
			\State $\hat{S_j} \gets \sum\limits_{\ell = 1}^{i-1} y_{\sigma_\ell(j)}, \ \forall j \in [m]$  
			\State Arrange $\hat{S_j}$ in increasing order. Let $\kappa_1, \kappa_2, \ldots, \kappa_m$ be the corresponding indices, namely $\hat{S_{\kappa_1}} \leq \hat{S_{\kappa_2}} \leq \ldots \leq \hat{S_{\kappa_n}}$ then $j$ gets permuted to $\kappa_j^{th}$ position. That is, $\sigma_i(\kappa_j) = j, \ \forall j \in [m]$.
			\EndFor{}
		\end{algorithmic}
	\end{algorithm}	
	
	\begin{algorithm}
		\caption{}\label{exponential_binning}
		\begin{algorithmic}[1]
			\State Make $2^k$ copies of the alphabet $\mathcal{Y} = \{y_1, y_2, \ldots, y_m\}$. WLOG assume $y_1 \geq y_2 \geq \ldots \geq y_m$. 
			\State Randomly divide the $2^k$ copies of $\mathcal{Y}$ into $r$ bins such that each bin has less than or equal to $2^k$ elements. Here, each bin will correspond to a symbol the $y$'s got mapped to.
			\State $S_i \gets \sum_{y \in \text{bin } i} y$ 
			\State while $\max{(S_i)} - \min(S_j) > y_1 - y_m$, exchange the largest element in bin $i$ with smallest element in bin $j$.
		\end{algorithmic}
	\end{algorithm}

	\section{Proofs for Single Source} \label{results_single_source}
	We here consider a single source $S$, a message $X$ drawn from a discrete alphabet $\mathcal{X}$, and a receiver who wants to compute a function $Y=f(X)$ that takes $m$ values (without loss of generality, assume that $y_1 \geq y_2 \geq \ldots \geq y_m$).
	The source and receiver use $k$ shared secret bits to protect $Y$.
	
	Depending on the value $K=i$ the key bits take, the source uses a different mapping $\sigma_i$ that maps each $y_j$ value to one of $r$ transmission symbols $\tau_j$.  We can depict all the $\sigma_i$ mappings together with a bipartite graph as in Fig.~\ref{fig:gen_mapping_single_source} (we allow parallel edges), where we think of every $\tau_j$ as a bin. We denote  the number of $y$-symbols  mapped into this bin  as $N_j$ and the sum of all these symbols by $S_j$, where $j \in [r] $.
	We refer to $S_j$ as the $j^{th}$ bin value.
	
	For example, the two mappings in Fig.~\ref{fig:gen_mapping_single_source}(b), depicted with solid and dotted lines, are created using Algorithm~\ref{greedy_binning}: we have four message symbols ($m =4$), four bins ($r = 4$), each bin has two elements  ($N_j = 2, \forall j$) and  bin value five   ($S_j = 5, \forall j$).

	\begin{figure}[!t]\centering
		\subcaptionbox{}{
			\begin{tikzpicture}
			\draw (1,1) node {\textbf{$Y$}};
			\draw [line width=0.3mm] (1,0) circle (4mm) node {$y_1$};
			\draw [line width=0.3mm] (1,-1) circle (4mm) node {$y_2$};
			\draw [line width=0.3mm] (1,-2) circle (4mm) node {$y_3$};
			\draw [line width=0.3mm] (1,-2.7) node {$\vdots$};
			\draw [line width=0.3mm] (1,-3.6) circle (4mm) node {$y_m$};
			
			\draw [line width=0.3mm] (1.4,0)--(2.2,-1.2);		
			\draw [line width=0.3mm, dashed] (1.4,0)--(2.2,-0.8);
			\draw [line width=0.3mm, dotted] (1.4,0)--(2.2,-0.4);
			\draw [line width=0.3mm, densely dotted ] (1.4,0)--(2.2,-0.1);
			
			\draw [line width=0.3mm, ->] (2,0.1) node[above] {$2^k$} .. controls (1.8, -0.4) .. (1.3,-0.5);
			
			\draw (3,1) node {\textbf{$\tau$}};
			\draw [line width=0.3mm] (3,0) circle (4mm) node {$\tau_1$};
			\draw [line width=0.3mm] (3,-1) circle (4mm) node {$\tau_2$};
			\draw [line width=0.3mm] (3,-2) circle (4mm) node {$\tau_3$};
			\draw [line width=0.3mm] (3,-2.7) node {$\vdots$};
			\draw [line width=0.3mm] (3,-3.6) circle (4mm) node {$\tau_r$};
			
			\draw [line width=0.3mm] (1.8,-2.2)--(2.6,-2);		
			\draw [line width=0.3mm, dashed] (1.8,-2.6)--(2.6,-2);
			\draw [line width=0.3mm, dotted] (1.8,-3)--(2.6,-2);
			\draw [line width=0.3mm, ->] (2,-2) node[above] {$\leq 2^k$} .. controls (2.2, -2.4) .. (2.7,-2.5);

			\end{tikzpicture}}\quad
		\subcaptionbox{}{
			\begin{tikzpicture}
			\draw (1,1) node {\textbf{$Y$}};
			\draw [line width=0.3mm] (1,0) circle (4mm) node {$1$};
			\draw [line width=0.3mm] (1,-1) circle (4mm) node {$2$};
			\draw [line width=0.3mm] (1,-2) circle (4mm) node {$3$};
			\draw [line width=0.3mm] (1,-3) circle (4mm) node {$4$};
			
			\draw (3,1) node {\textbf{$\tau$}};
			\draw [line width=0.3mm] (3,0) circle (4mm) node {$\tau_1$};
			\draw [line width=0.3mm] (3,-1) circle (4mm) node {$\tau_2$};
			\draw [line width=0.3mm] (3,-2) circle (4mm) node {$\tau_3$};
			\draw [line width=0.3mm] (3,-3) circle (4mm) node {$\tau_4$};

			\draw [line width=0.3mm, dotted] (1.4,0)--(2.6,-3);
			\draw [line width=0.3mm] (1.4,0)--(2.6,0);
			\draw [line width=0.3mm] (1.4,-1)--(2.6,-1);
			\draw [line width=0.3mm, dotted] (1.4,-1)--(2.6,-2);
			\draw [line width=0.3mm, dotted] (1.4,-2)--(2.6,-1);
			\draw [line width=0.3mm] (1.4,-2)--(2.6,-2);
			\draw [line width=0.3mm] (1.4,-3)--(2.6,-3);
			\draw [line width=0.3mm, dotted] (1.4,-3)--(2.6,0);
			
			\draw [line width=0.3mm, dotted] (1,-3.7)--(2.1,-3.7) node[right] {$K = 0$};
			\draw [line width=0.3mm] (1,-4)--(2.1,-4) node[right] {$K = 1$};		
			
			
			%
			%
			%
			\end{tikzpicture}}
		\caption{(a) Each of the $2^k$ mappings corresponds to $m$ edges in the bipartite graph; the key value determines which one to use. (b) Example of  mappings created with Algorithm~\ref{greedy_binning}.}
		\label{fig:gen_mapping_single_source}
	\end{figure}
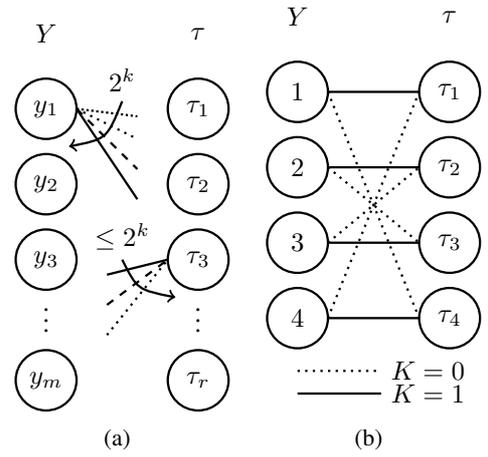
	
	We prove the  first bound in Theorem~\ref{thm:single_decay} by showing that the mappings created by Algorithm~\ref{greedy_binning} achieve at least the distortion in (\ref{eq1}).
	Lemma~\ref{lem:diff_expression} shows that maximizing distortion is equivalent to  minimizing
	the sum of the bin values $S_j$. Given that $\sum_{j=1}^r S_j$=$2^k\sum_{j=1}^n y_j$=constant, the minimum  is achieved when all the bin values are equal.
	Alg.~\ref{greedy_binning} tries to achieve this for the case where $r=m$ with a greedy approach that sequentially designs each of the mappings $\sigma_i$, $i=1,2,\ldots, 2^k$: at step~$i$, it
	calculates the partial bin values $\hat{S_j}$ that sum $y$-values allocated in the previous $i-1$ mappings; for the $\sigma_i$ mapping, it allocates the smaller symbol values to the bins that have accummulated the larger  $\hat{S_j}$ values so as to smooth out differences.
	
	
	
	\begin{restatable}{lem}{lemDiffExpression} \label{lem:diff_expression}
		
		\begin{align*}
		\Delta & = D_{max} - D_{ach}(k) \\
		&= \sum_{j=1}^{m}\! \frac{1}{p(\tau_j)} \left(\sum_{i=1}^{m}\!y_i p(\tau_j|y_i)p(y_i)\right)^2 \!\!- \! \mathbb{E}[Y]^2.  
		\end{align*}  
		In addition, if the source alphabet is uniformly distributed, then the expression simplifies to
		\begin{align*}
		\Delta & = D_{max} - D_{ach}(k) = \frac{1}{2^{k} m}\sum_{j=1}^{r} \frac{S_j^2}{ N_j} - \mathbb{E}[Y]^2.  	\end{align*}  
		where $S_j$ is the sum of all symbols  (not necessarily distinct) mapped to $\tau_j$.
	\end{restatable}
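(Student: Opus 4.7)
The plan is to rewrite $\Delta$ using the standard MMSE decomposition, then substitute the transition probabilities induced by the binning construction.

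First I would observe that since Eve's distortion is $D_{ach}(k) = \min_{\hat f}\mathbb{E}[(Y - \hat f(\tau))^2]$ over all estimators $\hat f$ of $Y$ from the transmitted symbol $\tau$, the minimizer is the conditional mean $\hat f(\tau_j) = \mathbb{E}[Y\mid \tau = \tau_j]$. This is the textbook MMSE fact and does not depend on the encoder. The resulting optimal distortion is $D_{ach}(k) = \mathbb{E}[\operatorname{var}(Y\mid \tau)]$. Since $D_{max} = \operatorname{var}(Y)$, the law of total variance gives
\begin{equation*}
\Delta = D_{max} - D_{ach}(k) = \operatorname{var}(\mathbb{E}[Y\mid \tau]) = \mathbb{E}\bigl[(\mathbb{E}[Y\mid \tau])^2\bigr] - \mathbb{E}[Y]^2.
\end{equation*}

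Next I would plug in Bayes' rule. Writing $\mathbb{E}[Y\mid \tau = \tau_j] = \frac{1}{p(\tau_j)}\sum_{i=1}^m y_i\, p(\tau_j\mid y_i)\, p(y_i)$ and $\mathbb{E}[(\mathbb{E}[Y\mid \tau])^2] = \sum_{j=1}^m p(\tau_j)(\mathbb{E}[Y\mid \tau = \tau_j])^2$, the $p(\tau_j)$ in the denominator cancels one power from the square, yielding exactly the first claimed formula
\begin{equation*}
\Delta = \sum_{j=1}^{m}\frac{1}{p(\tau_j)}\left(\sum_{i=1}^{m} y_i\, p(\tau_j\mid y_i)\, p(y_i)\right)^{\!2} - \mathbb{E}[Y]^2.
\end{equation*}

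For the specialization to a uniform source, I would unpack the bipartite graph notation of Fig.~\ref{fig:gen_mapping_single_source}. Let $N_{i,j}$ denote the number of keys $K\in[2^k]$ for which $\sigma_K$ maps $y_i$ into bin $\tau_j$, so that $p(\tau_j\mid y_i) = N_{i,j}/2^k$. With $p(y_i) = 1/m$, the inner sum becomes $\frac{1}{m\,2^k}\sum_i y_i N_{i,j} = \frac{S_j}{m\,2^k}$ by the definition of the bin sum $S_j$ (counting multiplicities), and the marginal is $p(\tau_j) = \frac{1}{m\,2^k}\sum_i N_{i,j} = \frac{N_j}{m\,2^k}$. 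Substituting gives $\Delta = \sum_{j=1}^{r} \frac{S_j^2}{2^k m\, N_j} - \mathbb{E}[Y]^2$, which is the second claim.

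There is no real obstacle here: the only care needed is to keep track of edge multiplicities when several keys map the same $y_i$ to the same bin, so that $N_{i,j}$, $N_j$, and $S_j$ are consistently counted with multiplicity. Once the MMSE reduction is in place, the rest is bookkeeping.
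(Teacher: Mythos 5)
Your proposal is correct and matches the paper's own argument step for step: the MMSE reduction $D_{ach}(k)=\mathbb{E}[\operatorname{var}(Y\mid\tau)]$ plus the law of total variance giving $\Delta=\operatorname{var}(\mathbb{E}[Y\mid\tau])$, Bayes' rule for the first formula, and the multiplicity counts $n_{ij}$ (your $N_{i,j}$) with $p(\tau_j\mid y_i)=n_{ij}/2^k$, $p(y_i)=1/m$, $p(\tau_j)=N_j/(2^k m)$ for the uniform specialization. The only cosmetic difference is that the outer sum should run over the $r$ bins (as in your uniform-case step); the $m$ in the lemma's first display is a typo in the statement, not an issue with your proof.
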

	{\em Proof.}
	If Eve observes a symbol $\tau$, then her best estimate is $\mathbb{E}\left[ Y \big | \tau  \right]$, with distortion $D_{ach} = \mathbb{E}\left[ \text{var} (Y) \big | \tau  \right] $. Thus 
	\begin{align*}
	&\Delta  = \text{var}(Y) - \mathbb{E}\left[ \text{var} (Y) ) \big | \tau  \right] 
	= \text{var}\left(\mathbb{E} \left[ Y | \tau \right] \right) \\
	&= \sum_{j=1}^{r} p(\tau_j) \left(\sum_{i=1}^{m} y_i p(y_i | \tau_j)\right)^2 - \mathbb{E}\left[Y \right]^2 \\
	&= \sum_{j=1}^{r} \frac{1}{p(\tau_j)} \left(\sum_{i=1}^{m} y_i p(\tau_j|y_i)p(y_i)\right)^2 - \mathbb{E}\left[Y \right]^2.
	\end{align*}
	
	For uniform $Y$, let $n_{ij}$ be the number of keys for which $y_i$ is mapped to $\tau_j$. Thus the difference becomes 
	\begin{align*}
	\Delta & = \sum_{j=1}^{r} \frac{1}{p(\tau_j)} \left(\sum_{i=1}^{m} y_i p(\tau_j|y_i)p(y_i)\right)^2 - \mathbb{E}\left[Y \right]^2 \\ 
	&= \sum_{j=1}^{r} \frac{1}{N_j/\left(2^k m\right)} \left(\sum_{i=1}^{m} y_i \left(\frac{n_{ij}}{2^k} \right) \left(\frac{1}{m}\right)\right)^2 - \mathbb{E}\left[Y \right]^2 \\
	&= \frac{1}{2^k m} \sum_{j=1}^{r} \frac{1}{N_j} \left(\sum_{i=1}^{m} y_i n_{ij}\right)^2 - \mathbb{E}\left[Y \right]^2 \\
	&= \frac{1}{2^{k} m}\sum_{j=1}^{r} \frac{1}{ N_j} \left(S_j\right)^2 - \mathbb{E}\left[Y \right]^2.
	\end{align*}
	\begin{proof}[Proof of~(\ref{eq1}) in Theorem~\ref{thm:single_decay}]
		For $1 \leq i \leq 2^k$, we use the $\sigma_i$ from Alg.~\ref{greedy_binning}, and map $\{y_{\sigma_i(j)} : i \in [2^k]\}$ for $j \in [m]$ to bin $j$. Using Lemma~\ref{lem:diff_expression},   
		\begin{align*}
		\Delta & =
		D_{max} - D_{ach}(k) =
		\frac{1}{2^{k} m}\sum_{j=1}^{r} \frac{S_j^2}{N_j}  - \mathbb{E}[Y]^2 \\
		& \stackrel{(i)}= \frac{1}{2^{2k} m}\sum_{j=1}^{m} \left( \sum\limits_{i=1}^{2^k} y_{\sigma_i (j)}\right)^2  - \mathbb{E}[Y]^2 \\
		& = \frac{1}{2^{2k} m} \left(\!\!\sum_{i=1}^{2^k}  \sum\limits_{j=1}^{m}\!\!y^2_{\sigma_i (j)}\!\! +\!\! 2 \sum_{\ell=2}^{2^k} \sum\limits_{j=1}^{m}\!\!y_{\sigma_\ell (j)} \sum\limits_{i=1}^{\ell-1}\!\!y_{\sigma_i (j)}\!\! \right)\!-\!\mathbb{E}[Y]^2  \\
		& \stackrel{(ii)}{\leq} \frac{1}{2^{2k} m} 2^k  \sum\limits_{j=1}^{m} y^2_j \\
		& \quad + \frac{2}{2^{2k} m} \sum_{\ell=2}^{2^k} m \left( \frac{\sum\limits_{j=1}^{m} y_j}{m} \right) \left( \frac{\sum\limits_{j=1}^{m} \sum\limits_{i=1}^{\ell-1} y_{\sigma_i (j)} }{m}\right)  - \mathbb{E}[Y]^2 \\
		& = \frac{1}{2^{k}}   \mathbb{E}[Y^2] + \frac{2}{2^{2k} m} \sum_{\ell=2}^{2^k} m \mathbb{E}[Y]  (\ell-1) \mathbb{E}[Y]   - \mathbb{E}[Y]^2 \\
		& = \frac{1}{2^{k}}   \mathbb{E}[Y^2] + \frac{2 \mathbb{E}[Y]^2}{2^{2k}} \frac{(2^k-1)2^k}{2}  - \mathbb{E}[Y]^2 \\
		& = \frac{1}{2^{k}}   \mathbb{E}[Y^2] - \frac{1}{2^{k}} \mathbb{E}[Y]^2 = \frac{1}{2^{k}} \text{var}(Y) =  \frac{1}{2^{k}} D_{max}.
		\end{align*} 
		
		Here $(i)$ follows by putting $N_j = 2^k$ and $(ii)$ follows from Chebyshev's sum inequality~\cite{hardy1988inequalities} as $y_{\sigma_\ell (j)}$ and $\sum\limits_{i=1}^{\ell-1} y_{\sigma_i (j)}$ are in opposing order due to the construction in Alg.~\ref{greedy_binning}.		
	\end{proof}
	To prove the second bound in Theorem~\ref{thm:single_decay}, we use Alg.~\ref{exponential_binning} to design the $\sigma_i$'s;
	Lemma~\ref{lem:bin_sum} bounds the resulting $S_j$ values.
	\begin{restatable}{lem}{lemBinSum} \label{lem:bin_sum}
		Let $d = \max{(y_i)} - \min{(y_j)}$, then at the end of  Alg.~\ref{exponential_binning}, $\mathbb{E}[Y] 2^k \frac{m}{r} - d \leq S_i \leq \mathbb{E}[Y] 2^k \frac{m}{r} + d, \ \forall i \in [r]$.
	\end{restatable}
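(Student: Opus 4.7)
The plan is to combine two observations: (i) the while-loop of Algorithm~\ref{exponential_binning} exits precisely when $\max_i S_i - \min_j S_j \leq y_1 - y_m = d$, and (ii) each swap in step~4 preserves the total bin-sum. From these, sandwiching each $S_i$ between $\bar{S} \pm d$ is routine; the non-trivial part is arguing that the algorithm actually terminates.

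First I would verify the invariant $\sum_{i=1}^r S_i = 2^k \sum_{j=1}^m y_j = 2^k m\,\mathbb{E}[Y]$: the swaps in step~4 only relocate elements between bins and so preserve this total. Consequently, the arithmetic mean $\bar{S} = 2^k m\,\mathbb{E}[Y]/r$ is constant throughout the execution. Upon termination the loop guard gives $\max_i S_i - \min_j S_j \leq d$, and since any mean lies between the min and max we have $\min_j S_j \leq \bar{S} \leq \max_i S_i$. Hence, for every bin index $\ell$,
\[
S_\ell \leq \max_i S_i \leq \min_j S_j + d \leq \bar{S} + d, \qquad S_\ell \geq \min_j S_j \geq \max_i S_i - d \geq \bar{S} - d,
\]
which is exactly the claimed two-sided bound.

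The main obstacle is establishing termination of the while-loop. My approach is a potential-function argument with $\Phi = \sum_i S_i^2$. When we swap the largest element $a$ in a max-sum bin $I$ with the smallest element $b$ in a min-sum bin $J$, a direct expansion yields $\Delta\Phi = -2(a-b)\bigl[(S_I - S_J) - (a-b)\bigr]$. Since $a \leq y_1$ and $b \geq y_m$, we have $a - b \leq d$, while the loop condition enforces $S_I - S_J > d \geq a-b$. Therefore whenever $a > b$ the potential strictly decreases, and because $\Phi$ takes only finitely many distinct values (the $S_i$'s are sums over submultisets of a fixed finite multiset), the loop must exit after finitely many iterations. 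The degenerate case $a = b$ is handled separately: it forces every element of bin $I$ to be $\leq a = b \leq$ every element of bin $J$, which, combined with the equal-bin-size constraint of step~2 (both bins hold $2^k$ elements when $r = m$; the general case $r \geq m$ reduces to this by a symmetric argument on equally sized bins), yields $S_I \leq S_J$ and contradicts the loop condition $S_I - S_J > d \geq 0$. Thus this case cannot arise during execution, termination follows, and the sandwich bound above completes the proof.
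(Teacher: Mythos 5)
Your proposal is correct, and its core is the same as the paper's: use the loop's exit condition $\max_i S_i - \min_j S_j \leq d$ together with the fact that the total $\sum_i S_i = 2^k m\,\mathbb{E}[Y]$ is invariant (so the mean bin value is $\mathbb{E}[Y]2^k m/r$), and sandwich each $S_\ell$ within $d$ of that mean; the paper phrases this as a contradiction via pigeonhole, you phrase it directly via ``the mean lies between min and max,'' which is the same argument. Where you genuinely diverge is termination: the paper does not treat it inside the lemma at all, but disposes of it in the proof of the bound (\ref{eq2}) with the one-line claim that $\max(S_i)-\min(S_j)$ decreases each step and takes finitely many values. Your potential-function argument with $\Phi=\sum_i S_i^2$ and $\Delta\Phi=-2(a-b)\bigl[(S_I-S_J)-(a-b)\bigr]<0$ is actually tighter than the paper's claim, since after a swap the max or min can still be attained at untouched bins, so $\max-\min$ need not strictly decrease, whereas $\Phi$ does (given $a>b$). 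Your handling of the degenerate case $a=b$ via equal bin sizes is fine for the setting in which the lemma is actually used ($r=m$, so every bin has exactly $2^k$ elements); the aside that the general case of unequal bin sizes ``reduces by symmetry'' is hand-waving and in fact false in corner cases (with unequal bins one can have $a=b$ while $S_I-S_J>d$, and the swap then makes no progress), but this affects only a generality the paper itself never needs.
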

	\begin{proof}
		Suppose the contrary that there is some $i$ such that $S_i > \mathbb{E}[Y] 2^k \frac{m}{r} + d$. 	By the pigeonhole principle, there is also some $j$ so that $S_j < \mathbb{E}[Y] 2^k \frac{m}{r}$. But then $S_j - S_i > d$, which contradicts the terminating condition of the algorithm.
		Similar reasoning holds for the case where there is some $i$ such that $S_i < \mathbb{E}[Y] 2^k \frac{m}{r} - d$.
	\end{proof}

	\begin{proof}[Proof for (\ref{eq2}) in Theorem \ref{thm:single_decay}]
		We first note that, in each step of  Alg.~\ref{exponential_binning}, $\max{(S_i)} - \min(S_j)$ is  decreasing. Since there are only finite many possible values for $\max{(S_i)} - \min(S_j)$, the algorithm terminates. 
		
		Take $r = m$, so there are $2^k$ elements in each bin, and so $\Delta = \frac{1}{2^{2k} m} \sum\limits_{i=1}^m S_i^2 -  \mathbb{E}[Y]^2$. By Lemma~\ref{lem:bin_sum}, we have $ \mathbb{E}[Y] 2^k -d \leq S_i \leq  \mathbb{E}[Y] 2^k + d$. We also note that $\sum\limits_{i=1}^{m} S_i = m 2^k  \mathbb{E}[Y]$. So to bound $\Delta$, we have the following optimization problem:
		\begin{align*}
		\begin{array}{ll}
		{\rm{maximize}} & \sum\limits_{i=1}^m S_i^2 
		\\ {\rm{subject \ to}} &  \sum\limits_{i=1}^{m} S_i   =  \ m 2^k \mathbb{E}[Y]
		\\  & \mathbb{E}[Y] 2^k - d \  \leq S_i   \leq \ \mathbb{E}[Y] 2^k + d, \ \forall i \in [m].
		\end{array}
		\end{align*}	
		 Then $ \sum\limits_{i=1}^m S_i^2 $ is upper bounded by $m (2^{2k} \mathbb{E}[Y]^2\!+\!d^2)$. Thus,
		\begin{align*}
		\Delta & \leq \frac{1}{2^{2k}} (2^{2k} \mathbb{E}[Y]^2 + d^2) -  \mathbb{E}[Y]^2  = \frac{d^2}{2^{2k}}.
		\end{align*} 
	\end{proof}
	
	We next characterize properties of optimal mappings, using the bipartite graph representation   in Fig.~\ref{fig:gen_mapping_single_source}.
	The following lemma bounds the degrees and number of vertices in the graph, which we use to prove Theorem~\ref{thm:optimality}.
	\begin{restatable}{lem}{lemGraph} \label{lem:graph}
		For the mapping in Fig.~\ref{fig:gen_mapping_single_source}, the following are true.
		\begin{enumerate}		
			\item The degree of  each $y_j$ is $2^k$.
			\item The degree of each $\tau_j$ is $\leq 2^k$. This implies $N_j \leq 2^k$.
			\item For an optimal mapping, $N_j \leq 2^{k-1}$ for at most one $j$.
			\item For an optimal mapping, $m\leq r < 2m$.
		\end{enumerate}
	\end{restatable}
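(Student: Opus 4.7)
The first two parts should follow directly from the setup. Each mapping $\sigma_i$ contributes exactly one out-edge from $y_j$, so $\deg(y_j) = 2^k$. Unique decodability of $Y$ from $(\tau, i)$ forces each $\sigma_i$ to be injective on $[m]$, so every bin $\tau_\ell$ receives at most one edge per mapping and hence $N_\ell = \deg(\tau_\ell) \leq 2^k$.

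For part 3 I would argue by contradiction. Suppose an optimal mapping has two bins $j_1 \neq j_2$ with $N_{j_1}, N_{j_2} \leq 2^{k-1}$, so that $N_{j_1}+N_{j_2} \leq 2^k$ still respects the capacity from part 2. Lemma~\ref{lem:diff_expression} tells us that maximizing $D_{ach}$ is equivalent to minimizing $\sum_j S_j^2/N_j$, and the Cauchy--Schwarz inequality in Engel form gives $\tfrac{S_{j_1}^2}{N_{j_1}} + \tfrac{S_{j_2}^2}{N_{j_2}} \geq \tfrac{(S_{j_1}+S_{j_2})^2}{N_{j_1}+N_{j_2}}$. Hence replacing $j_1,j_2$ with their merged bin weakly decreases the objective (strictly unless $S_{j_1}/N_{j_1}=S_{j_2}/N_{j_2}$, in which case one obtains an equally optimal mapping with one fewer bin). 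The only feasibility obstruction is injectivity: if some $\sigma_i$ sends edges to both $j_1$ and $j_2$, merging creates a double edge. I would resolve this by redirecting the conflicting $j_2$-edge of $\sigma_i$ to a bin currently unused by $\sigma_i$; since $\sigma_i$ is injective from $[m]$ into $[r]$ it touches only $m$ bins, leaving $r-m$ untouched, so such a slot is available whenever $r>m$. The boundary $r = m$ is vacuous, because then each $\sigma_i$ is a bijection, forcing $N_j = 2^k$ everywhere and contradicting $N_{j_1}\leq 2^{k-1}$ for $k \geq 1$. Iterating the merge-and-redirect shows an optimal mapping can be chosen with at most one small bin.

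Part 4 then reduces to counting. The lower bound $r \geq m$ is immediate from $\sum_j N_j = 2^k m$ together with $N_j \leq 2^k$. For the upper bound, invoking part 3 gives at least $r - 1$ bins with $N_j \geq 2^{k-1}+1$ and a remaining bin with $N_j \geq 1$, so $(r-1)(2^{k-1}+1) + 1 \leq 2^k m$, which rearranges to $r < 2m$.

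The most delicate step is the redirection inside part 3: one must verify that the redirects for every conflicting $\sigma_i$ can be carried out simultaneously without introducing new capacity violations at the receiving bins, and that the cumulative change in $\sum_j S_j^2/N_j$ remains non-positive after all the edge moves are accounted for. Parts 1, 2, and 4 are essentially direct counting once part 3 is in hand.
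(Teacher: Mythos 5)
Your parts 1, 2 and 4 match the paper's argument (the paper proves part 2 via decodability and part 4 via exactly the same count $m2^k \geq (r-1)(2^{k-1}+1)+1$; the residual looseness between $r\leq 2m$ and $r<2m$ is present in the paper as well, so I do not count it against you). The issue is part 3. The paper's argument is simply that merging two bins of size at most $2^{k-1}$ replaces Eve's observation $\tau$ by a function $\tau'$ of $\tau$, so her MMSE distortion can only increase; your Cauchy--Schwarz (Engel form) computation on $\sum_j S_j^2/N_j$ is a correct uniform-case version of the same monotonicity. Where your proposal has a genuine gap is the feasibility repair. Your fix for a key conflict --- redirecting the conflicting key-$i$ edge of $j_2$ to some bin not used by $\sigma_i$ --- moves a $y$-value into a different bin, which changes that bin's $S_j$ and $N_j$; the sign of the resulting change in $\sum_j S_j^2/N_j$ is uncontrolled, and the receiving bin may already be at capacity $2^k$. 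You flag this yourself as unverified, and indeed as stated the step does not go through: redirection is the wrong tool because it alters the bin contents on which the distortion depends.

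The clean way to close the gap is to repair decodability without touching the bins at all: after merging, the bipartite multigraph between the $y_j$'s and the bins still has every $y_j$ of degree exactly $2^k$ and every bin of degree at most $2^k$ (the merged bin has $N_{j_1}+N_{j_2}\leq 2^k$), so by K\"onig's edge-colouring theorem its edges admit a proper colouring with $2^k$ colours; each colour class hits every $y_j$ exactly once and each bin at most once, hence defines a valid injective mapping $\sigma_i$. Since the distortion (equivalently $\sum_j S_j^2/N_j$, by Lemma~\ref{lem:diff_expression}) depends only on the multigraph and not on which key labels the edges carry, this relabelling preserves the improvement you established by merging, and the contradiction (or the construction of an equally good mapping with one fewer small bin) follows. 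Note the paper itself asserts ``we can merge them'' without addressing this key-conflict issue, so you correctly identified a real subtlety; you just need the relabelling argument rather than edge redirection to resolve it.
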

	
	\begin{proof}
		\textcolor{white}{.}1)  Since the key $K$ can take $2^k$ different values, there will be one outgoing edge for each value of key.\\ 
		2) The legitimate receiver needs to decode $Y$, and a degree greater than $2^k$ will make it impossible to decode based on the shared key.\\
		3) Suppose there are two bins having less than or equal to $2^{k-1}$ elements, then we can merge them. Eve's distortion before merging is
		$D = \min_{\hat{Y} = g(\tau)}\mathbb{E} [(Y-\hat{Y})^2].$
		Now, call the random variable representing the transmitted symbol after merging  $\tau'$. This gives distortion
		$D = \min_{\hat{Y} = g(\tau')}\mathbb{E} [(Y-\hat{Y})^2]$.
		Since $\tau'$ is a function of $\tau$, this gives a higher distortion.\\
		4) The net degree of $Y$ is $m2^k$. Similarly, the net degree of $\tau$ is not greater than $r 2^k$. By part 1 of the lemma, $m2^k \leq r 2^k$. This implies $r \geq m$. Based on the third part of this lemma, each bin will have more than $2^{k-1}$ edges except probably one. This bounds the total degree ($\mathcal{I}$) of $\tau$ as $m 2^k  = \mathcal{I} \geq (r-1)(2^{k-1} + 1) + 1 >   (r-1)(2^{k-1}).$ Thus $r \leq 2m$. 
	\end{proof}
	
	\begin{thm}
		\label{thm:optimality}
		The encoding scheme in Alg.~\ref{greedy_binning} is optimal for $k=1$. Moreover, if $\mathcal{Y}$ is ``regularly spaced'' i.e. $\mathcal{Y} = \{y, y + d,\ldots, y+(m-1)d \}, \text{ for some } y, d \in \mathbb{R}, d \leq 0$, it  achieves perfect distortion security.  
	\end{thm}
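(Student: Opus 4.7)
\smallskip
\noindent\textbf{Proof proposal.}
I would attack the two claims independently, starting with the regularly-spaced case since it is essentially a direct calculation once one traces what Algorithm~\ref{greedy_binning} actually produces when $k=1$. By construction $\sigma_1$ is the identity, so at step $i=2$ we have $\hat{S}_j = y_j$, already sorted in decreasing order by assumption. The greedy step therefore takes $\kappa_j = m+1-j$ and outputs $\sigma_2(j) = m+1-j$ (the reversal permutation). Consequently bin $j$ contains exactly the pair $\{y_j,y_{m+1-j}\}$. For a regularly-spaced alphabet $y_j = y+(j-1)d$, this pair sums to $S_j = 2y+(m-1)d = 2\,\mathbb{E}[Y]$, independent of $j$, and $N_j=2$. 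Plugging into Lemma~\ref{lem:diff_expression} gives
\begin{equation*}
\Delta \;=\; \frac{1}{2m}\sum_{j=1}^{m}\frac{S_j^{2}}{N_j} - \mathbb{E}[Y]^{2}
\;=\; \frac{1}{2m}\cdot m\cdot\frac{(2\,\mathbb{E}[Y])^{2}}{2} - \mathbb{E}[Y]^{2} \;=\; 0,
\end{equation*}
which is perfect distortion security.

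For the first claim (optimality at $k=1$), the plan is to combine the structural constraints in Lemma~\ref{lem:graph} with a one-line application of the rearrangement inequality. Let $b_s$ denote the number of bins with $N_j = s$. Lemma~\ref{lem:graph} says $N_j\le 2$, at most one bin has $N_j\le 2^{k-1}=1$, and $m\le r<2m$. Counting edges on both sides of the bipartite graph yields $b_1 + 2b_2 = 2m$ and $b_1 + b_2 = r$, so $b_1 = 2(r-m)$ is even. Combined with $b_1\le 1$, this forces $b_1 = 0$ and $r=m$; every bin holds exactly two symbols. Hence, up to relabeling of the bins, any optimal scheme at $k=1$ can be described by a single permutation $\pi$ on $[m]$ (with $\sigma_1=\mathrm{id}$ and $\sigma_2=\pi$), and bin $j$ contains $\{y_j,y_{\pi(j)}\}$.

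Applying Lemma~\ref{lem:diff_expression} again, minimizing $\Delta$ reduces to minimizing
\begin{equation*}
\sum_{j=1}^{m} S_j^{2} \;=\; \sum_{j=1}^{m} (y_j + y_{\pi(j)})^{2} \;=\; 2\sum_{j=1}^{m} y_j^{2} \;+\; 2\sum_{j=1}^{m} y_j\, y_{\pi(j)}
\end{equation*}
over permutations $\pi$. The first sum is independent of $\pi$, so the task reduces to minimizing $\sum_{j} y_j\,y_{\pi(j)}$ with $y_1\ge y_2\ge\cdots\ge y_m$. By the rearrangement inequality, the minimum is attained by the order-reversing permutation $\pi(j)=m+1-j$, which by the calculation above is precisely the permutation output by Algorithm~\ref{greedy_binning}. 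This proves optimality.

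The only step that requires any care is the parity argument that rules out $b_1=1$ and forces $r=m$; once the bin structure is pinned down, both claims collapse to, respectively, a direct evaluation of the bin sums and a single invocation of the rearrangement inequality.
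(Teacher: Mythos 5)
Your proposal is correct and follows essentially the same route as the paper: reduce to the $N_j=2$, $r=m$ bin structure via Lemma~\ref{lem:graph}, apply the uniform-case formula of Lemma~\ref{lem:diff_expression}, invoke the rearrangement inequality to show the reversal permutation $\pi(j)=m+1-j$ (which is exactly what Alg.~\ref{greedy_binning} outputs for $k=1$) minimizes $\Delta$, and check that for a regularly spaced alphabet every bin sum equals $2\,\mathbb{E}[Y]$. The only differences are cosmetic and in fact welcome: your edge-counting parity argument makes explicit the deduction $b_1=0$, $r=m$ that the paper asserts directly from Lemma~\ref{lem:graph}, and you conclude perfect security by evaluating $\Delta=0$ from Lemma~\ref{lem:diff_expression} rather than via Lemma~\ref{thm:max_dist_cond}.
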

	
	\begin{proof}
		By Lemma~\ref{lem:graph}, $N_j = 2$ for each $j$, so
		\begin{align*}
		& \Delta  = D_{max} - D_{ach}(k) =  \frac{1}{2^{2k} m}\sum_{j=1}^{m} S_j^2 - \mathbb{E} [Y]^2 \\
		\quad & = \frac{1}{4 m}\sum_{j=1}^{m}\!\! S_j^2\!-\!\mathbb{E} [Y]^2 = \frac{1}{4m} \sum_{j=1}^m (y_{\sigma_1(j)} + y_{\sigma_2(j)})^2 - \mathbb{E}[Y]^2 \\
		\quad & = \frac{1}{2m}\left(\sum_{j=1}^m y_j^2 + \sum_{j=1}^m y_{\sigma_1(j)} y_{\sigma_2(j)} \right) - \mathbb{E}[Y]^2. 
		\end{align*}
		
		Without loss of generality, assume that $\sigma_1$ is the identity function, then by the Rearrangement Inequality, $\Delta$ is minimized when $\sigma_2(j) = m-1-j$. This is the same $\sigma_2(j)$ we get from Alg.~\ref{greedy_binning}.	Now for the regularly spaced $\mathcal{Y}$,
		\begin{align*}
		S_j & = y_j + y_{m-j+1} = 2y + (m-1)d, \\
		\mathbb{E} [Y | \tau_j] & = \frac{S_j}{2} = y + \frac{(m-1)}{2} d 
		=\mathbb{E} [Y], \ \forall j \in [m]. 
		\end{align*}
		
		Thus by Lemma~\ref{thm:max_dist_cond}, $D_{ach}(1)  = D_{max}$.
	\end{proof}
	
	\section{Proofs for Multiple Sources} \label{sec:results_multiple_sources}
	To prove Theorem~\ref{thm:separable}, we repeatedly apply  Lemma~\ref{thm:max_dist_cond}, which follows from 
	standard results in MMSE estimation  \cite{kailath2000linear}. 
	
	\begin{lem}
		\label{thm:max_dist_cond}
		The function $f(\underbar{X})$ is perfectly secured under the Euclidean distortion measure if and only if, 
		\begin{align*}
		\mathbb{E} \left[ f(\underbar{X}) \big | g(\underbar{X})\right] = \mathbb{E} \left[ f(\underbar{X})\right], \ \forall g(\underbar{X}). 
		\end{align*}
	\end{lem}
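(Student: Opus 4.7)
The plan is to reduce the statement to a standard fact about MMSE estimation combined with the law of total variance. First, I would note that under squared error distortion, Eve's optimal estimator is the conditional expectation $\hat{f}^\star(g(\underbar{X})) = \mathbb{E}[f(\underbar{X}) \mid g(\underbar{X})]$, a basic MMSE result; substituting this into the definition of $D_{ach}$ gives
\begin{equation*}
D_{ach} = \mathbb{E}\bigl[(f(\underbar{X}) - \mathbb{E}[f(\underbar{X}) \mid g(\underbar{X})])^2\bigr] = \mathbb{E}\bigl[\text{var}(f(\underbar{X}) \mid g(\underbar{X}))\bigr].
\end{equation*}

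Next, the law of total variance yields
\begin{equation*}
D_{max} = \text{var}(f(\underbar{X})) = \mathbb{E}\bigl[\text{var}(f(\underbar{X}) \mid g(\underbar{X}))\bigr] + \text{var}\bigl(\mathbb{E}[f(\underbar{X}) \mid g(\underbar{X})]\bigr),
\end{equation*}
so the security gap rewrites cleanly as $D_{max} - D_{ach} = \text{var}(\mathbb{E}[f(\underbar{X}) \mid g(\underbar{X})])$. Both directions of the lemma then follow from when this variance is zero.

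For the ``if'' direction, if $\mathbb{E}[f(\underbar{X}) \mid g(\underbar{X})] = \mathbb{E}[f(\underbar{X})]$ for every realization of $g(\underbar{X})$, the conditional expectation is almost surely constant, its variance is zero, and hence $D_{ach} = D_{max}$. For the ``only if'' direction, perfect security forces $\text{var}(\mathbb{E}[f(\underbar{X}) \mid g(\underbar{X})]) = 0$, which forces the conditional expectation to be almost surely constant; taking an outer expectation identifies that constant as $\mathbb{E}[f(\underbar{X})]$.

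I do not anticipate a genuine obstacle, since the argument is essentially the orthogonality principle plus total variance. The only point that needs minor care is the interpretation of the quantifier ``$\forall g(\underbar{X})$,'' which should be read as holding for (almost) every value the observation can take, so that it is genuinely equivalent to the vanishing of $\text{var}(\mathbb{E}[f(\underbar{X}) \mid g(\underbar{X})])$ used above.
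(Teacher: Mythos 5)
Your argument is correct and matches the paper's: the paper cites this as a standard MMSE fact, and its (omitted) derivation is exactly your route --- the conditional-mean estimator, the law of total variance giving $D_{max} - D_{ach} = \text{var}\left(\mathbb{E}\left[f(\underbar{X}) \mid g(\underbar{X})\right]\right)$, and the observation that this variance vanishes iff the conditional expectation is the constant $\mathbb{E}[f(\underbar{X})]$. No issues.
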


	\begin{proof}[Proof of Theorem \ref{thm:separable}]
		{\bf Sufficient:} If $f_i^{(\ell)}(X_i)$'s are secure, then 
		\begin{align*}
		& \mathbb{E} \left[ f(\underbar{X}) \big | g(\underbar{X})\right]	
		=  \mathbb{E} \left[\sum\limits_{\ell=1}^{L} \prod_{i=1}^{n}f_i^{(\ell)}(X_i)| g(\underbar{X}) \right] \\
		\quad & = \sum\limits_{\ell=1}^{L}  \prod_{i=1}^n \mathbb{E} \left[ f_i^{(\ell)}(X_i) | g(\underbar{X}) \right] = \sum\limits_{\ell=1}^{L}  \prod_{i=1}^n \mathbb{E} \left[ f_i^{(\ell)}(X_i)  \right] \\
		\quad & = \mathbb{E} \left[ \sum\limits_{\ell=1}^{L}   \prod_{i \in [n]}  f_i^{(\ell)}(X_i) \right] = \mathbb{E} \left[ f(\underline{X}) \right].
		\end{align*}
		
		
		{\bf Necessary:} Suppose not all $f_i(X_i)$ are secured. For $f(\underbar{X})  = \sum_{i=1}^n f_i(X_i)$, by Lemma~\ref{thm:max_dist_cond} we can pick $g(X_i)$ so that $\mathbb{E}\left[ f_i(X_i) |g(X_i) \right] \neq \mathbb{E}\left[ f_i(X_i) \right]$ for each unsecured $f_i(X_i)$. Since $\mathbb{E}\left[\mathbb{E}\left[ f_i(X_i) |g(X_i) \right] \right] = \mathbb{E}[f_i(X_i)] $ we can further assume without loss of generality that $\mathbb{E}\left[ f_i(X_i) |g(X_i) \right] > \mathbb{E}\left[ f_i(X_i) \right]$. With this,
		\begin{align*}
		& \mathbb{E}\left[ f(\underbar{X}) \big | g(\underbar{X}) \right] 
		= \sum\limits_{i=1}^n \mathbb{E}\left[ f_i(X_i) \big | g(\underbar{X}) \right] \\
		\quad & > \sum\limits_{i=1}^n \mathbb{E}\left[ f_i(X_i) \right]  = \mathbb{E} \left[ f(\underbar{X}) \right]. 
		\end{align*}
		Thus $ f(\underbar{X})  = \sum\limits_{i=1}^n f_i(X_i)$ is not secure.
		
		Similarly for the case $f(\underbar{X})  = \prod_{i=1}^n f_i(X_i)$, for each unsecured $f_i(X_i)$, we pick $g(X_i)$ so that $\lvert\mathbb{E}\left[f_i(X_i) | g(X_i)\right]\rvert > \lvert\mathbb{E}\left[f_i(X_i)\right]\rvert$. Then we have
		\begin{align*}
		& \big\lvert\mathbb{E}\left[ f(\underbar{X}) \big | g(\underbar{X}) \right]\big\rvert  = \prod\limits_{i=1}^n \big\lvert\mathbb{E}\left[ f_i(X_i) \big | g(\underbar{X}) \right]\big\rvert \\
		& > \prod\limits_{i \in [n]} \big\lvert\mathbb{E}\left[ f_i(X_i) \right]\big\rvert  = \big\lvert\mathbb{E}\left[f(\underbar{X})\right]\big\rvert.
		\end{align*} 
		Thus  $f(\underbar{X})  = \prod_{i=1}^n f_i(X_i)$ is also not secure. It remains to show it is necessary to transmit each $f_i(X_i)$.
		
		Suppose the receiver uses a function $h$ on $\{g(X_i), i \in [n]\} $ to compute $f(\underbar{X})$. If $f(\underbar{X})  = \sum\limits_{i=1}^n f_i(X_i)$, then 
		\begin{align*}
		f_i(X_i) & = h\left(g(x_1),\ldots, g(X_i), \ldots, g(x_m)\right) -  \sum\limits_{j \neq i} f_j(x). 	
		\end{align*}
		If $f(\underbar{X})  = \prod_{i=1}^n f_i(X_i) $,
		\begin{align*}
		f_i(X_i) & = \frac {h\left(g(x_1),\ldots, g(x_{i-1}), g(X_i), g(x_{i+1}), \ldots, g(x_n)\right)} {\prod\limits_{j \neq i} f_j(x_j)}, 
		\end{align*}
		where we choose $x_j$, such that $f_j(x_j) \neq 0$. 
		Thus $f_i(X_i)$ are necessary to communicate with the receiver.
	\end{proof}
	
	\section{Conclusions}
	\label{discussions}
	In this paper, we argued that distortion based security is important for applications such as CPS, and presented the first coding schemes that achieve short-block length (single-shot) distortion security.    We found that  Eve's distortion increases exponentially with the number of bits of shared key, and proved that our schemes  are optimal in some cases. We provide necessary and sufficient conditions for security for a number of interesting cases,  such as for functions that can be written as sum or product of functions in individual variables. This includes various statistical functions like mean and variance.
	
	\bibliographystyle{IEEEtran}
	\bibliography{ISIT2017_distortion}

\end{document}